\documentclass[%
 reprint,
 amsmath,amssymb,
]{revtex4-2}
\usepackage{graphicx}
\usepackage{subcaption}
\usepackage{dcolumn}
\usepackage{bm}
\usepackage{xcolor}

\newtheorem{theorem}{Theorem}[section]
\newtheorem{corollary}[theorem]{Corollary}
\newtheorem{proposition}[theorem]{Proposition}

\newtheorem{lemma}[theorem]{Lemma}

\newenvironment{proof}[1][Proof]{\begin{trivlist}
\item[\hskip \labelsep {\bfseries #1}]}{\end{trivlist}}
\allowdisplaybreaks

\newenvironment{remark}[1][Remark]{\begin{trivlist}
\item[\hskip \labelsep {\bfseries #1}]}{\end{trivlist}}

\newcommand{\qed}{\nobreak \ifvmode \relax \else
	\ifdim\lastskip<1.5em \hskip- \lastskip
	\hskip 0.5em plus0em minus0.5em \fi \nobreak
	\vrule height0.75em width0.5em depth0.25em\fi}

\usepackage{hyperref}
\begin{document}

\preprint{APS/123-QED}

\title{Conformal symmetries and MOTS stability}

\author{Abbas M Sherif}
\email{abbasmsherif25@gmail.com}
 \altaffiliation{Institute of Mathematics, Henan Academy of Sciences (HNAS), 228 Mingli Road, Zhengzhou 450046, Henan, China.}

\date{\today}

\begin{abstract}
Let $\{\Sigma_t\}$ be a spacelike foliation of a spacetime $\mathcal{M}$, and suppose each $\Sigma_t$ is foliated by 2-surfaces $\mathcal{S}$ with spacelike unit normal in $\Sigma_t$. We show that under mild energy conditions, a MOTS $\mathcal{S}$ that intersects integral curves of past-pointing conformal Killing vector field lying in the normal space of $\mathcal{S}$ is strictly stable and evolves smoothly to a spacelike horizon. We also show that if the restriction of the divergence of the vector field to $\mathcal{S}$ is non-negative, $\mathcal{S}$ is unstable, and if negative and $\mathcal{S}$ is a 2-sphere, $\mathcal{S}$ must be strictly stable. 
\end{abstract}

\keywords{Black hole horizons; Conformal Symmetry; Marginally outer trapped surfaces (MOTS); MOTS stability}

\maketitle

\section{\label{sec:level1}Introduction}


Black holes have emerged as a key ``laboratory'' for merging quantum and gravity physics, with their utilities being exploited by both numerical and mathematical relativists. Their use in understanding thermal properties of spacetimes is well noted. While initially the event horizon provided the fundamental understanding of many properties of black holes, their teleological nature made them not suitable for the understanding of how these objects locally evolve.

The notion of ``trapped surfaces'' emerged in the seminal work of Penrose, \cite{pen1,pen2}, to formalize the description of a spacetime singularity. These are surfaces on which all null geodesics intersecting them orthogonally converge. Their presence, under some mild curvature conditions, leads to the formation of a singularity. (Also see \cite{sen1} for a comprehensive review of the singularity theorems.) Cosmic censorship \cite{pen3} posits that the formation of a horizon should precede the formation of a singularity to shield the singularity from the domain of outer communication.

When a spacetime is foliated by a family of spacelike hypersurfaces which are asymptotically Minkowski, one refers to a point $p$ in a hypersurface as trapped provided it is contained in a trapped surface in the hypersurface. An apparent horizon is the boundary of the collection of all of these trapped points. If there is a consistent way to choose an ``outward'' and ``inward'' to a surface, many authors prefer to define the marginally trapped condition as the vanishing of the expansion of the tangent to the outward pointing null congruences, and such a surface is called a marginally outer trapped surfaces (MOTS) \cite{hay0,hay1}. The apparent horizon will satisfy the MOTS condition. The two however do not exactly coincide, obviously.                                                       

A MOTS in general will evolve to foliate a marginally outer trapped tubes (MOTT) which is more general than the apparent horizon, and the particular character of the evolution depends on some additional constraints \cite{ash1,ash2,ib1,jar0}. The MOTS condition does away with the asymptotically Minkowski requirement in the definition of an apparent horizon, as well as the general impracticability of checking the boundary of all of such trapped points.

Of the several (albeit equivalent) approaches to understanding the local evolution of MOTS, i. e. whether or not a MOTS smoothly evolves into a MOTT, the formulation via a certain notion of stability introduced by Andersson \textit{et al.} in \cite{and1} has won out to be the most widely used in recent literature. (Also see \cite{and2} which expands on the results of \cite{and1}, and \cite{ib10} where the operator appears from computing normal variations of the expansion.). Here the stability of a MOTS is formulated as an eigenvalue problem for a certain linear second order elliptic operator. The sign of the principal eigenvalue of the operator determines if the MOTS is stable or unstable. Under suitable conditions, a strictly stable MOTS, where the principal eigenvalue is positive, will evolve to a smooth horizon, and if the shear of the MOTS along the outward direction is not identically zero or the null energy condition (NEC) strictly holds somewhere on the MOTS, this horizon will be spacelike.

In the presence of Killing symmetries, the existence/non-existence of MOTS and their evolution have been examined in various works. Mars and Senovilla \cite{mar0} have ruled out the existence of MOTS in a strictly stationary spacetime. Subsequent works (see for example \cite{mar2,mar3} and references therein) have proved valuable in elucidating the restrictions on the presence of MOT in spacetimes admitting conformal Killing vectors (CKV) and MOTS stability. At the initial data level, some incredible results have been obtained in \cite{ib5}, which have gone further in examining not just the principal eigenvalues of the stability operator but also the nature of the negative eigenvalues in the case of unstable MOTS. (A modest contribution to a part of the work of \cite{ib5} was provided in a note \cite{as3}, where a decomposition of the symmetry vector was assumed and the roles of the zeros of its components on stability/instability were analyzed.)

In this letter, we show that in the presence of a CKV with a particular orientation, and with the NEC assumption, any MOTS intersecting integral curves of the CKV will be strictly stable and hence will smoothly evolve to a horizon by Theorems 1 and 2 of Andersson {\em et al.}. In particular, we prove the following local evolution result for MOTS. 
\begin{theorem}\label{th75}
Let $\mathcal{M}$ be a 1+3 spacetime whose spacelike leaves are foliated by \(2\)-surfaces with unit normal $n^a$. Let $\mathcal{S}$ be one of these surfaces on which the NEC strictly holds somewhere, and suppose $\eta^a$ is a past-pointing CKV in the normal space of $\mathcal{S}$, which is a KV for the conformal metric in a neighborhood $\mathcal{U}\subseteq\mathcal{M}$. Then, if $\mathcal{S}$ is a MOTS in $\mathcal{U}$, it smoothly evolves to a spacelike horizon. 
\end{theorem}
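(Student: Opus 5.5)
The idea is to show that $\mathcal{S}$ is strictly stable in the sense of Andersson \emph{et al.} \cite{and1}. Once that is established, their Theorems 1 and 2 do the rest: the first gives a smooth marginally outer trapped tube through $\mathcal{S}$, and the second makes that tube spacelike, since the NEC holds strictly somewhere on $\mathcal{S}$.

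\textbf{Setup.} Let $u^a$ be the unit normal to the leaves $\Sigma_t$ and $n^a$ the unit normal to $\mathcal{S}$ in $\Sigma_t$. The null normals of $\mathcal{S}$ are $k^a=u^a+n^a$ and $l^a=u^a-n^a$, with outgoing expansion $\theta_k$, which vanishes on $\mathcal{S}$. Since $\eta^a$ lies in the normal space of $\mathcal{S}$, write
\begin{equation*}
\eta^a=\alpha k^a+\beta l^a .
\end{equation*}
Here $\alpha$ and $\beta$ are functions on $\mathcal{S}$. Past-pointing means $\alpha+\beta<0$, i.e.\ the $u$-component $-(\alpha+\beta)$ of $\eta$ is negative. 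The conformal Killing equation is $\mathcal{L}_\eta g_{ab}=2\phi\, g_{ab}$, where the hypothesis that $\eta$ is a KV for the conformal metric in $\mathcal{U}$ fixes how $\phi$ enters.

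\textbf{Step 1: the flow variation of $\theta_k$.} Under a conformal rescaling the null expansion scales homogeneously, so the MOTS condition $\theta_k=0$ is preserved along the flow of $\eta$ inside $\mathcal{U}$. The first variation of $\theta_k$ along $\eta$ therefore satisfies
\begin{equation*}
\delta_\eta\theta_k=\phi\,\theta_k=0 \quad\text{on }\mathcal{S},
\end{equation*}
possibly after correcting for the rescaling of $k$. Expanding the normal variation of $\theta_k$ with the standard formula gives
\begin{equation*}
\delta_{\alpha k+\beta l}\theta_k=-\alpha\left(|\sigma_k|^2+G_{ab}k^ak^b\right)+\beta\, L_{\mathcal{S}}\text{-type terms}.
\end{equation*}
Rewriting this in terms of the variation along $n$ in $\Sigma_t$ (decompose $\eta=-(\alpha+\beta)u+(\alpha-\beta)n$), one finds that the stability operator $L$ of \cite{and1} admits a function $\psi$ with $L\psi=\chi$. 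Here $\psi$ is essentially $\alpha-\beta$, or a ratio such as $(\alpha-\beta)/(\alpha+\beta)$ after normalizing the time component, and $\chi$ is built from the $u$-variation of $\theta_k$ weighted by $-(\alpha+\beta)>0$ and from $|\sigma_k|^2+G_{ab}k^ak^b\ge0$ via the NEC.

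\textbf{Step 2: the principal eigenvalue.} If $\psi>0$ and $L\psi\ge0$ with $L\psi\not\equiv0$, the standard argument of \cite{and1} (Lemma 2 there) gives $\lambda_1>0$. One pairs the principal eigenfunction $\phi_1>0$ of the adjoint $L^*$ against $\psi$ to get
\begin{equation*}
\lambda_1\int\phi_1\psi=\int\phi_1 L\psi>0 .
\end{equation*}
Strictness of the inequality comes from the point where the NEC holds strictly.

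\textbf{Main obstacle.} The hard part is the sign bookkeeping. One needs the orientation hypothesis, the past-pointing condition, to force $\psi$ to have a fixed sign on $\mathcal{S}$. One also needs to check that the terms involving the conformal factor and the $l$-component of $\eta$ combine into something of definite sign rather than spoiling $L\psi\ge0$. I would first try to choose the normalization of $\eta$ within the conformal class, which is exactly where the KV property for the conformal metric is used, so that the $\phi$-terms drop out. After that I would verify that $\alpha$ and $\beta$ cannot vanish on $\mathcal{S}$. Otherwise $\eta$ would be null somewhere along $\mathcal{S}$, and this would have to be excluded, or handled by the maximum principle.
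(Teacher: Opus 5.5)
\textbf{Step 1.} Your conclusion $\delta_\eta\theta_k=0$ on $\mathcal{S}$ matches the paper's Lemma, but the reason you give is false for a proper CKV. Under a conformal change the null expansion does not scale homogeneously: it picks up an additive term involving the derivative of the conformal factor along $k$. Correspondingly, the paper's variation formula is $\delta_\eta\theta_k=2\mathcal{L}_k\varphi$, which is not proportional to $\theta_k$ in general. What kills this term is the hypothesis that $\eta$ is a KV for the conformal metric. In the paper this means the conditions \eqref{2.7}, $\dot\varphi=-\tfrac12 z_1$ and $\varphi'=-\tfrac12 z_2$. These give $2\mathcal{L}_k\varphi=-\theta_k$ (eq.~\eqref{var1}), hence $\delta_\eta\theta_k=-\theta_k=0$ on a MOTS. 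This is a constraint on the derivatives of $\varphi$, not a normalization of $\eta$; rescaling $\eta$ by a function would destroy the conformal Killing property anyway.

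\textbf{The test function and its sign.} This is the substantive gap, and neither of your candidates for $\psi$ works. Write $\eta=(\alpha+\beta)u+(\alpha-\beta)n$. The vanishing variation then gives $L_{\mathcal{S}}(\alpha-\beta)=-\delta_{(\alpha+\beta)u}\theta_k$. The $u$-variation of $\theta_k$ contains the full second-order operator acting on $\alpha+\beta$, so it has no sign. The ratio does not help either: $\delta_{fX}\theta_k\neq f\,\delta_X\theta_k$ once $X$ has an $l$-component, since derivatives of $f$ enter.

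Your own expansion $\delta_{\alpha k+\beta l}\theta_k=-\alpha(\ldots)+\beta\,L$-type terms already points to the right choice. The elliptic part acts on the $l$-coefficient, so the test function is $\beta=\tilde\eta$. The paper uses the identity from \cite{and2},
\[
L_{\mathcal{S}}\tilde\eta=-\tfrac12\left(\delta_\eta\theta_k+4\bar\eta W\right),\qquad W=K^{(k)}_{ab}K^{ab(k)}+\mathcal{G}_{ab}k^ak^b ,
\]
which on the MOTS becomes $L_{\mathcal{S}}\tilde\eta=-2\bar\eta W$.

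More importantly, past-pointing only gives $\alpha+\beta<0$. It cannot fix the sign of $\beta$, nor of $\alpha-\beta$, so your plan to get the sign of $\psi$ from the orientation hypothesis cannot succeed. The paper obtains $\tilde\eta\ge 0$ from a separate argument. Under the NEC the surfaces $\mathcal{S}_v$ along the flow are weakly outer trapped for small $v$, and the argument in the proof of Theorem 4 of \cite{mar3} then excludes $\tilde\eta<0$. Only after taking $\tilde\eta>0$ is past-pointing used, to get $\bar\eta<-\tilde\eta<0$. That gives $L_{\mathcal{S}}\tilde\eta\ge0$, positive where $W>0$.

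With that in hand, your Step 2 goes through. Pairing with the positive principal eigenfunction of $L^*_{\mathcal{S}}$ gives $\lambda_0>0$, which is a cleaner justification than the paper's assertion that $\tilde\eta$ is itself the principal eigenfunction.
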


By past pointing it is meant that the inner product with the canonical timelike vector field is positive. Subsequently, from the arguments involved in proving the above theorem, the following result follows.
\begin{theorem}\label{th76}
Let $\mathcal{M}$ be a 1+3 spacetime whose spacelike leaves are foliated by \(2\)-surfaces with unit normal $n^a$. Let $\mathcal{S}$ be one of these surfaces on which the NEC strictly holds somewhere, and suppose $\eta^a$ is a CKV in the normal space of $\mathcal{S}$, which is a KV for the conformal metric in a neighborhood $\mathcal{U}\subseteq\mathcal{M}$. Then, for a stable MOTS $\mathcal{S}$ in $\mathcal{U}$ and a point $p\in\mathcal{S}$, $\eta^a$ cannot lie to the interior of the light cone at $p$.
\end{theorem}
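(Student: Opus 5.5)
We argue by contradiction and reuse the variational identity behind Theorem \ref{th75}. Suppose $\mathcal{S}$ is a stable MOTS in $\mathcal{U}$ and that at some $p\in\mathcal{S}$ the CKV $\eta^a$ lies strictly inside the light cone. Since $\eta^a$ lies in the normal space of $\mathcal{S}$, we decompose it in the null normal frame $\{\ell^a,k^a\}$ of $\mathcal{S}$, normalized by $\ell_a k^a=-1$ with $\ell^a$ the outgoing null normal. Thus
\begin{equation*}
\eta^a=\alpha\,\ell^a+\beta\,k^a ,
\end{equation*}
with smooth $\alpha,\beta$ on $\mathcal{S}$. Timelike at $p$ means $\alpha\beta>0$ at $p$. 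Both signs of the time orientation occur: the past-pointing case is the one handled in Theorem \ref{th75}, and the future-pointing case follows by replacing $\eta^a\to-\eta^a$, which is still a CKV and a KV of the conformal metric.

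The key input is the formula for the variation of the outgoing expansion along a CKV. Write $\mathcal{L}_\eta g_{ab}=2\phi g_{ab}$. The conformal metric $\tilde g=\Omega^{-2}g$ is invariant under the flow of $\eta$, so $\tilde\theta_\ell$ is preserved and $\delta_\eta\theta_\ell$ is a multiple of $\theta_\ell$ plus terms in $\phi$. On a MOTS these terms vanish (or reduce to a $\phi\,\theta_\ell$ term, which is zero), giving $\delta_\eta\theta_\ell=0$ on $\mathcal{S}$. On the other hand, the general first variation of $\theta_\ell$ along $V=\alpha\ell+\beta k$ is
\begin{equation*}
\delta_V\theta_\ell=-\alpha\big(|\sigma_\ell|^2+G_{ab}\ell^a\ell^b\big)+L_{k}\beta ,
\end{equation*}
where $L_k$ is the MOTS stability operator of Andersson \emph{et al.} in the $k$-direction (equivalently, up to a normalization, along $-n^a$ in $\Sigma_t$). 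Combining the two gives
\begin{equation*}
L\beta=\alpha\big(|\sigma_\ell|^2+G_{ab}\ell^a\ell^b\big).
\end{equation*}
This is exactly the identity used in proving Theorem \ref{th75}. There, with $\eta$ past-pointing, $\alpha,\beta$ have a fixed sign, and by the NEC the right-hand side is $\ge 0$ and not identically zero; the Andersson \emph{et al.} maximum principle then gives $\lambda_1>0$.

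Now for the contradiction. Let $\phi_1>0$ be the principal eigenfunction of $L^\dagger$ (same principal eigenvalue $\lambda_1$). Integrating gives
\begin{equation*}
\lambda_1\int_{\mathcal S}\phi_1\beta=\int_{\mathcal S}\phi_1\,\alpha\big(|\sigma_\ell|^2+G_{ab}\ell^a\ell^b\big).
\end{equation*}
The plan is to use the timelike condition at $p$, where $\alpha\beta>0$, together with the sign structure that the CKV imposes on $\alpha,\beta$ globally (the orientation is constant along $\mathcal{S}$ by continuity within $\mathcal{U}$). This should give $\lambda_1<0$ in the case opposite to Theorem \ref{th75}, or push us into that theorem's setting. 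The stability assumption $\lambda_1\ge0$ then forces the right-hand side to vanish near $p$, contradicting the strict NEC somewhere.

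The main obstacle is making this sign argument honest. Theorem \ref{th75} yields strict stability, not instability, for past-pointing timelike $\eta$, so the theorem as stated can only hold if the variation formula's sign convention makes the timelike case produce $L\beta$ and $\beta$ of opposite signs. I would first pin down the orientation conventions ($\ell,k$ versus $n^a$ and $u^a$) used in the proof of Theorem \ref{th75}. Then I would check directly whether $\alpha\beta>0$ together with NEC strictness forces $\lambda_1\le0$, with equality excluded. The second, easier concern is localizing from a single point $p$ to an open set, which follows by continuity of $\alpha,\beta$.
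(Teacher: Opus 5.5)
\textbf{There is a genuine gap in your proposal.} The ``obstacle'' you flag at the end is caused by an error in your key identity, not by a choice of conventions, and the decisive sign step is only announced (``I would check\dots''), never carried out.

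\textbf{The core identity has the wrong sign.} You take $L$ to be the Andersson et al.\ operator, whose principal eigenvalue decides stability. Then $L\beta=\alpha(|\sigma_\ell|^2+G_{ab}\ell^a\ell^b)$ has the wrong sign and also misses a zeroth-order term. Your own parenthetical says the variation along the ingoing null normal is, up to normalization, the variation along $-n^a$. That is a negative multiple of $L_{\mathcal S}$, and the ingoing normal's component along the outgoing null direction adds a zeroth-order piece.

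In the paper's notation ($k^a$ outgoing, $l^a$ ingoing; your $\ell^a,k^a,\alpha,\beta$ are its $k^a,l^a,\bar\eta,\tilde\eta$ up to scaling) the computation is as follows. Use $l^a=k^a-2n^a$, together with $\delta_{\psi k}\theta_k=-\psi W$ on a MOTS and $\delta_{\psi n}\theta_k=L_{\mathcal S}\psi$. This gives $\delta_{\psi l}\theta_k=-2L_{\mathcal S}\psi-\psi W$. Hence $\delta_\eta\theta_k=0$ yields $L_{\mathcal S}\tilde\eta=-\tfrac12(\bar\eta+\tilde\eta)W$; the paper records this relation as $L_{\mathcal S}\tilde\eta=-2\bar\eta W$. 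In either form, when $\eta^a$ is timelike ($\bar\eta\tilde\eta>0$) the right-hand side has the sign opposite to $\tilde\eta$, whereas yours has the same sign.

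Your sign cannot be right. It would give $\lambda_1>0$ for every future-directed timelike $\eta^a$ with $\delta_\eta\theta_k=0$ (when $W\not\equiv0$), contradicting Proposition~\ref{marprop}.

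\textbf{You also misread Theorem~\ref{th75}.} In its proof $\tilde\eta>0>\bar\eta$, so $g_{ab}\eta^a\eta^b=-4\bar\eta\tilde\eta>0$ and $\eta^a$ is spacelike. ``Past-pointing'' there only means $u_a\eta^a>0$. So there is no tension with Theorem~\ref{th76} to reconcile.

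\textbf{The integral argument cannot give the pointwise statement.} Even with the sign fixed, pairing with the adjoint eigenfunction controls only integrals such as $\int_{\mathcal S}\phi_1(\bar\eta+\tilde\eta)W$. Timelikeness at $p$, or near $p$ by continuity, cannot be turned into a contradiction this way. You would need the signs of $\alpha$ and $\beta$ on all of $\mathcal S$, and constancy of the time orientation of $\eta^a$ does not supply that.

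\textbf{The missing ingredient is the one the paper's proof rests on.} The paper takes $\tilde\eta$ (your $\beta$) positive on all of $\mathcal S$. It justifies this from the NEC via the weakly-outer-trapped argument adapted from Theorem~4 of \cite{mar3}. It then reads stability off $\lambda_0\tilde\eta=-2\bar\eta W$, using $W>0$ on $\mathcal S$. Then $\lambda_0\ge0$ forces $\bar\eta\le0$, hence $\bar\eta\tilde\eta\le0$ and $g_{ab}\eta^a\eta^b=-4\bar\eta\tilde\eta\ge0$ at $p$. That is the whole proof.

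Be aware that this pointwise step treats $\tilde\eta$ as the principal eigenfunction. Without it, your corrected integral argument proves only a global version: if $\eta^a$ is timelike on all of $\mathcal S$ and the NEC holds strictly somewhere, then $\lambda_1<0$ (apply the argument to $\pm\eta^a$). This is essentially the mechanism behind Proposition~\ref{marprop}.
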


From Theorem \ref{th76} the following result immediately follows as a corollary.
\begin{corollary}\label{th77}
Let $\mathcal{M}$ be a 1+3 spacetime whose spacelike leaves are foliated by \(2\)-surfaces with unit normal $n^a$. Let $\mathcal{S}$ be one of these surfaces on which the NEC strictly holds somewhere, and suppose $\eta^a$ is a CKV in the normal space of $\mathcal{S}$, which is a KV for the conformal metric in a neighborhood $\mathcal{U}\subseteq\mathcal{M}$. Then, for any MOTS $\mathcal{S}$ in $\mathcal{U}$ and a point $p\in\mathcal{S}$, if $\eta^a$ timelike at $p$, $\mathcal{S}$ is unstable.
\end{corollary}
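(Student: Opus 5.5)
The corollary follows from Theorem~\ref{th76} by contraposition, so the plan is to check that its hypotheses match and then read off the conclusion. The standing assumptions are the same as in Theorem~\ref{th76}: $\mathcal{M}$ is a 1+3 spacetime foliated as described, the NEC holds strictly somewhere on $\mathcal{S}$, and $\eta^a$ is a CKV in the normal space of $\mathcal{S}$ that is a KV for the conformal metric on $\mathcal{U}$. Let $\mathcal{S}$ be a MOTS in $\mathcal{U}$, and suppose $\eta^a$ is timelike at some $p\in\mathcal{S}$.

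The first step is to note that a timelike vector at $p$ lies strictly inside the light cone at $p$. This holds whether $\eta^a$ is future or past directed. Since $\eta^a$ lies in the two-dimensional normal space of $\mathcal{S}$, spanned by the null normals $\ell^a$ and $k^a$, being timelike means $\eta^a=\alpha\ell^a+\beta k^a$ with $\alpha\beta>0$ (in the normalization $\ell\cdot k=-1$). That is exactly the interior of the light cone restricted to the normal plane.

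Next, suppose for contradiction that $\mathcal{S}$ were stable, meaning the principal eigenvalue $\lambda_1$ of the stability operator of Andersson \emph{et al.} satisfies $\lambda_1\geq 0$. Then Theorem~\ref{th76} applies, and it says $\eta^a$ cannot lie in the interior of the light cone at any point of $\mathcal{S}$, in particular at $p$. This contradicts the first step, so $\lambda_1<0$ and $\mathcal{S}$ is unstable.

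The only point needing care is terminology. One has to confirm that ``stable'' in Theorem~\ref{th76} means $\lambda_1\geq 0$, not only strict stability. If it meant only $\lambda_1>0$, the contrapositive would give $\lambda_1\le 0$, and the marginal case $\lambda_1=0$ would have to be excluded separately. I expect the proof of Theorem~\ref{th76} to give a strict sign; for instance, the strict NEC somewhere and the Krein--Rutman positivity of the principal eigenfunction would give $\lambda_1<0$ outright. So I would invoke Theorem~\ref{th76} with stable meaning $\lambda_1\ge0$, and no further computation is needed.
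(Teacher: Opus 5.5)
Your proposal is correct and matches the paper's argument: the paper says the corollary follows immediately from Theorem~\ref{th76}, and your contraposition, with timelike at $p$ meaning interior of the light cone at $p$, is exactly that step. Your terminological worry is settled by the paper's own definitions, where stable means $\lambda_0\geq 0$ and every other MOTS is called unstable. So the marginal case needs no separate argument, and the Krein--Rutman aside can be dropped.
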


The condition that the CKV generates a KV for the conformal metric, crucial to our results, provides for the simplifications that go into our proofs. This requirement also establishes some properties that had previously been ruled out for a proper CKV. For example, under this assumption (even without imposing a preferred orientation on the CKV) it will be shown that for a local diffeomorphism $\Xi$ generated by the CKV, $\Xi(\mathcal{S})$ is a MOTT. In fact, this is true in the case of a homothetic Killing vector, including the Killing vector case (see for example the proof of Theorem 4 of \cite{mar3}). This follows from the fact that the variation of the expansion on a surface (whether it is a MOTS or not) is proportional to the expansion, so that in the case of a MOTS, the smooth family of surfaces are all MOTS. In the case of a CKV however, an extra term involving the variation of the divergence of the CKV appears. Under our assumptions, as will be seen, this term coincides with the expansion as well, up to a sign.

This paper is structured as follows. Section \ref{2} briefly introduces the geometry and dynamics of MOTS, including the notion of stability of MOTS. In Section \ref{3}, we introduce the relationship between properties of conformally related metrics on a spacetime and the MOTS condition, with the MOTS condition being a property of the variation of the divergence of the conformal Killing vector field along null geodesics. In Section \ref{4}, the stability of MOTS is analyzed and used to prove the main results of this work, Theorems \ref{th75} and \ref{th76}. We also comment on stability/instability of a MOTS established from properties of the conformal divergence. 


\section{The MOTS geometry and stability}\label{2}


We consider a spacetime $(\mathcal{M},g_{ab})$ foliated into spacelike slices $\Sigma_t$ of the spacetime, with unit timelike normal vector $u^a$ (we point the reader to the references \cite{hay1,ash1,ash2,ib1}). We pick a slice $\Sigma_0\in \{\Sigma_t\}$ and assume a foliation of $\Sigma_0$ by spacelike 2-surfaces with spacelike unit normal vector field $n^a$ in $\Sigma_0$ with $n_au^a=0$. For a 2-surface $\mathcal{S}$ in $\Sigma_0$, we fix the null gauge 
\begin{align}
k^a=u^a+n^a;\quad l^a=u^a-n^a,
\end{align}
with normalization $k_al^a=-2$, which spans the normal space at each point of $\mathcal{S}$. Then, the spacetime metric induces a 2-metric $q_{AB}$ on $\mathcal{S}$ as
\begin{align*}
q^{ab}=q^{AB}e_A^ae_B^b=g^{ab}+k^{(a}l^{b)},
\end{align*}
where $e^a_{A}$ are the pushfoward/pullback operators. These surfaces $\mathcal{S}$ are our surfaces of interest here. 

For almost all of the results that are to follow, it is emphasized that these surfaces of interest have unit normal in their embedding slice. This is to ensure that $n^a$ may be extended off a surface $\mathcal{S}$. However, it is important to point out that it is sufficient for the foliation to hold in some neighborhood of $\mathcal{S}$ for the subsequent results to be valid.

Now, the divergences of the congruences generated by \(k^a\) and \(l^a\) are respectively
\begin{align}
\mathcal{D}_ak^a&=\theta_k=z_1+z_2;\quad\mathcal{D}_al^a=\theta_l=z_1-z_2,\label{3.14}
\end{align}
where we have defined the scalars $z_1=\mathcal{D}_au^a$ and $ z_2=\mathcal{D}_an^a$, with $\mathcal{D}_a={q_a}^b\nabla_b$ being the compatible covariant derivative on the surface $\mathcal{S}$. The scalar $z_2$ is the mean curvature of the surface $\mathcal{S}$ and when it vanishes identically on $\mathcal{S}$, $\mathcal{S}$ is said to be a minimal surface and all standard results for minimal surfaces apply. A point $p\in\mathcal{S}$ will be called a minimal point of $\mathcal{S}$ if $z_2$ vanishes at $p$. 

We have fixed the orientation of $\mathcal{S}$ such that $k^a$ is outward pointing to $\mathcal{S}$ and $l^a$ is inward pointing. Then, respectively, the divergences are referred to as the {\em outgoing} and {\em ingoing} null expansions.

A 2-surface $\mathcal{S}$ is called a marginally outer trapped surface, abbreviated to MOTS, if the outgoing null expansion $\theta_k$ vanishes everywhere on $\mathcal{S}$. And if $\mathcal{S}$ is also a minimal surface, it will be referred to as a minimal MOTS. If the ingoing expansion is further constrained to satisfy $\theta_l<0$ everywhere on $\mathcal{S}$, the ``outer'' is dropped and $\mathcal{S}$ is simply called a MTS (or just a {\em marginal surface} as per the language of Hayward's \cite{hay1}). 

We note our convention here is such that we are choosing the definition for MOTS for which the outgoing expansion vanishes. Since the mean curvature vector along the MOTS is
\begin{align}
H^a\propto-\theta_lk^a-\theta_kl^a,
\end{align} 
our convention is requiring the mean curvature to align with the outgoing null direction in the case of a MOTS, i.e. it is future pointing. So for any trapping mentioned in this work we are referring to future trapping.

MOTS may be arranged to foliate a hypersurface, called a {\em marginally outer trapped tube} (MOTT) (this would be an MTT, sometimes referred to as holographic screen, if the surfaces are MTS), which can be spacelike, null, or timelike at a point. And if the MOTT is smooth we call it a horizon. In some cases for which a MOTT is everywhere null or spacelike, it will bound a black hole, with the latter requiring the null energy condition to strictly hold somewhere, in which case it is referred to as a dynamical horizon. 

In order to define the notion of stability, one considers a 1-parameter family of surfaces $\mathcal{S}_v$, obtained by deforming a MOTS $\mathcal{S}=\mathcal{S}_{v=0}$ along the flow generated by the vector field $\psi n^a$. Then, the stability operator captures how the outgoing null expansion $\theta_k$ associated to $\mathcal{S}$ changes under such deformation, which was shown to be equivalent to the action of a second order elliptic operator on the function $\psi$ \cite{and1,and2}:
\begin{align}
\delta_{\psi n}\theta_k=L_{\mathcal{S}}\psi,\label{st1}
\end{align}  
where the differential operator $L_{\mathcal{S}}$ takes the explicit form
\begin{align}
L_{\mathcal{S}}&=-\Delta_{\mathcal{S}}+s^a\mathcal{D}_a+Y\nonumber\\
Y&=\frac{1}{2}\left(\mathcal{R}+2\mathcal{D}_as^a-2s_as^a-\mathcal{G}_{ab}k^al^b\right).\label{st2}
\end{align}
Here, $\mathcal{R}$ is the scalar curvature of the MOTS, the 1-form $2s_a=-l_bq^c_a\nabla_ck^b$ is the connection on the normal bundle of the MOTS (sometimes referred to as the H\'{a}ji\u{c}ek 1-form and is a generator of rotation), and $\mathcal{G}_{ab}$ is the Einstein tensor. If on $\mathcal{S}$ there is a (and not identically zero) scalar $\psi$ such that $L_{\mathcal{S}}\psi>0$ ({\em resp.} $L_{\mathcal{S}}\psi=0$), $\mathcal{S}$ is characterized as strictly stable ({\em resp.} marginally stable). Otherwise, $\mathcal{S}$ is said to be unstable. The equivalence between the sign of $L_{\mathcal{S}}\psi$ and the principal eigenvalue $\lambda_0$ of $L_{\mathcal{S}}$, i.e. that with the smallest real part which is always real, was established in \cite{and1,and2}. That is, solving the eigenvalue problem
\begin{align}
L_{\mathcal{S}}\psi=\lambda\psi,\label{st3}
\end{align}  
for the principal eigenvalue $\lambda_0$, characterizes stability: $\lambda_0\geq0$ implies stability and $\lambda_0>0$ implies strict stability. The marginal case $\lambda_0=0$ provides a complicated picture where a vanishing eigenvalue leads to bifurcation and other interesting characters (see for example the reference \cite{boo13}).


\section{MOTS in spacetimes with conformally related metrics}\label{3}


Two metrics \(g_{ab}\) and \(\tilde{g}_{ab}\) on a spacetime \(\mathcal{M}\) are said to be conformally related if
\begin{eqnarray}\label{2.1}
\tilde{g}_{ab}=e^{2\varphi}g_{ab},
\end{eqnarray}
for some smooth function \(\varphi\) on \(\mathcal{M}\). Such a symmetry may be generated by a vector field which we notate as \(\eta^a\), called the conformal Killing vector (CKV), which satisfies the {\em conformal Killing equation} (CKE) equation
\begin{eqnarray}\label{2.2}
\mathcal{L}_{\eta}g_{ab}=2\nabla_{(a}\eta_{b)}=2\varphi g_{ab}.
\end{eqnarray}
Here $\mathcal{L}_{\eta}$ is the Lie derivative along the vector field $\eta^a$, $\nabla_a$ the covariant derivative with respect to the spacetime metric $g_{ab}$, $4\varphi=\nabla_a\eta^a$ is the conformal divergence, and the round brackets indicates symmetrization. For $\varphi=\mbox{constant}\neq0$, $\eta^a$ is a homothetic Killing vector (HKV), and for $\varphi=0$, $\eta^a$ is a Killing vector (KV). Otherwise, $\eta^a$ is a {\em proper} CKV. The metric $\tilde{g}_{ab}$ will be referred to as the {\em conformal metric}. And depending on the sign of the conformal divergence, we will say that the conformal observers (those along orbits of $\eta^a$) are diverging ($\varphi>0$) or converging ($\varphi<0$).

Let $\eta^a$ be a smooth vector field in the normal space to a surface $\mathcal{S}$. Then, as $\{k^a,l^a\}$ is a basis for the normal space, we may write
\begin{align}
\eta^a&=\bar{\eta}k^a+\tilde\eta l^a,\label{2.3}
\end{align} 
for functions $\bar{\eta}$ and $\tilde{\eta}$. It will further be assumed that $\eta^a$ is not identically zero on $\mathcal{S}$. Let us suppose that $\eta^a$ is a CKV for the spacetime with metric $g_{ab}$. Then, the Lie derivative of the conformal metric $\tilde{g}_{ab}$ along $\eta^a$ is
\begin{align}
\mathcal{L}_{\eta}\tilde{g}_{ab}=2\Psi\tilde{g}_{ab},\label{2.4}
\end{align}
where we have written
\begin{eqnarray}\label{2.5}
\Psi=\mathcal{L}_{\eta}\varphi+\varphi.
\end{eqnarray}
Here $\Psi$ is the conformal divergence with respect to the conformal metric. 

Now, if we project the conformal Killing equation (CKE) \eqref{2.2} to the surface $\mathcal{S}$ we get
\begin{align}
2\varphi=(\bar{\eta}+\tilde{\eta})z_1 + (\bar{\eta}-\tilde{\eta})z_2.\label{2.6}
\end{align}
Therefore, by choosing
\begin{align}
\dot{\varphi}=-\frac{1}{2}z_1,\quad \varphi'=-\frac{1}{2}z_2,\label{2.7}
\end{align}
where we have introduced the ``dot'' and ``prime'' notations for the Lie derivatives along the directions $u^a$ and $n^a$, respectively, we have $\mathcal{L}_{\eta}\varphi=-\varphi$, so that we have $\Psi=0$. That is, with this choice of the convective derivatives of the conformal divergence, $\eta^a$ is a KV for the conformal metric. From the above we can state the following result.
\begin{proposition}\label{th1}
Let $\mathcal{M}$ be a 1+3 spacetime whose spacelike leaves are foliated by \(2\)-surfaces with unit normal $n^a$. Let $\mathcal{S}$ be one of the surfaces. Any CKV in the normal space of $\mathcal{S}$ determines a KV for the conformal metric, in a neighborhood $\mathcal{U}\subseteq\mathcal{M}$, via \eqref{2.7}. 
\end{proposition}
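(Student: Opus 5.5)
The strategy is to reduce the claim to showing $\Psi=0$ in \eqref{2.4}. By \eqref{2.5}, this is equivalent to the transport equation $\mathcal{L}_\eta\varphi=-\varphi$ along the integral curves of $\eta^a$. Since $\eta^a$ lies in the normal space of $\mathcal{S}$, we can write it in the null basis as in \eqref{2.3}, and then pass to the orthonormal basis $\{u^a,n^a\}$:
\begin{align*}
\eta^a=(\bar\eta+\tilde\eta)u^a+(\bar\eta-\tilde\eta)n^a .
\end{align*}
This gives $\mathcal{L}_\eta\varphi=(\bar\eta+\tilde\eta)\dot\varphi+(\bar\eta-\tilde\eta)\varphi'$. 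The task therefore splits into two parts. First, identify $\varphi$ on $\mathcal{S}$ through the projected conformal Killing equation. Second, prescribe its normal derivatives so that this combination equals $-\varphi$.

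For the first part, contract \eqref{2.2} with $q^{ab}$. Since $q^{ab}g_{ab}=2$, and $q^{ab}\nabla_a\eta_b$ equals $\mathcal{D}_a\eta^a$ (the normal-space coefficients are not differentiated tangentially, because $q^{ab}k_b=q^{ab}l_b=0$), we obtain
\begin{align*}
2\varphi=\bar\eta\,\theta_k+\tilde\eta\,\theta_l .
\end{align*}
Using \eqref{3.14}, this is exactly \eqref{2.6}. Substituting the choice \eqref{2.7} into the expression for $\mathcal{L}_\eta\varphi$ gives $-\tfrac12\big[(\bar\eta+\tilde\eta)z_1+(\bar\eta-\tilde\eta)z_2\big]=-\varphi$ on $\mathcal{S}$, and hence $\Psi=0$ there.

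For the second part, extend the construction to a neighborhood $\mathcal{U}$. Here the foliation assumption does the work. The slices $\Sigma_t$ and their 2-surface foliations provide $u^a$ and $n^a$, and hence $z_1$ and $z_2$, as smooth functions on a neighborhood of $\mathcal{S}$. Regard \eqref{2.7} as first-order data prescribing $\dot\varphi$ and $\varphi'$ on the normal directions. On each leaf, $\varphi$ is fixed by \eqref{2.6}. Along the two normal directions we integrate $\dot\varphi=-z_1/2$ and $\varphi'=-z_2/2$ off $\mathcal{S}$ by ODE existence, with the leafwise relation \eqref{2.6} serving as the initial or compatibility condition. Repeating the first part on every leaf in $\mathcal{U}$ then gives $\Psi=0$ throughout $\mathcal{U}$, so $\eta^a$ is a Killing vector for $\tilde g_{ab}=e^{2\varphi}g_{ab}$.

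I expect the main obstacle to be consistency. In the CKE, $\varphi$ is already determined as $\tfrac14\nabla_a\eta^a$, so prescribing both $\dot\varphi$ and $\varphi'$ on the leaves in $\mathcal{U}$ overdetermines it, and the values from the leafwise \eqref{2.6} must agree with those from integrating \eqref{2.7}. My first attempt would be to note that the proof only needs the single combination along $\eta^a$, namely $\mathcal{L}_\eta\varphi=-\varphi$, and not both normal derivatives separately. This is one ODE along the orbits of $\eta^a$, and it is always solvable locally wherever $\eta^a\neq0$. So \eqref{2.7} would be read as a sufficient choice satisfying this single equation, with $\mathcal{U}$ shrunk to avoid the zeros of $\eta^a$. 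I would check any remaining integrability by commuting the $u$- and $n$-derivatives using the foliation structure.
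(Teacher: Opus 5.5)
Your computation on $\mathcal{S}$ is exactly the paper's argument. You decompose $\eta^a=(\bar\eta+\tilde\eta)u^a+(\bar\eta-\tilde\eta)n^a$, trace the CKE with $q^{ab}$ to get \eqref{2.6}, and substitute \eqref{2.7} to obtain $\mathcal{L}_\eta\varphi=-\varphi$, i.e.\ $\Psi=0$. The paper stops there. It treats \eqref{2.7} as a condition that holds on the surfaces of the foliation in $\mathcal{U}$; the text right after the proposition says the result ``relies on whether or not the conditions \eqref{2.7} hold''. The same pointwise computation on each such surface then gives $\Psi=0$ on $\mathcal{U}$.

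Your second part tries to arrange \eqref{2.7} rather than assume it, and this does not work. The function $\varphi$ is not at your disposal: the CKE fixes it as $\varphi=\frac14\nabla_a\eta^a$. As you observe yourself, \eqref{2.6} already determines it on every surface of the foliation, hence on all of $\mathcal{U}$. So there is nothing to integrate off $\mathcal{S}$. Instead, \eqref{2.7} is a pair of first-order constraints on a given function, and they may simply fail; this is why the paper derives the integrability conditions \eqref{integ2} and \eqref{integ4}.

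Your fallback has the same defect. The equation $\mathcal{L}_\eta\varphi=-\varphi$ is a condition on $\eta^a$, not an ODE with a free unknown. It becomes solvable only if you decouple the conformal factor, taking $\tilde g_{ab}=e^{2\omega}g_{ab}$ with $\mathcal{L}_\eta\omega=-\varphi$. That is a true, standard statement, but a different one. It also loses \eqref{var1}, $2\mathcal{L}_k\varphi=-\theta_k$, which needs both components of \eqref{2.7} separately and is what the later results use.

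The unconditional reading is in fact false. In the paper's de Sitter example, $\eta=-\partial_\sigma$ lies in the normal space of the round 2-spheres and has $\varphi=-\tan\sigma$. Hence $\Psi=\sec^2\sigma-\tan\sigma>0$ everywhere, so $\eta$ is not a KV for $e^{2\varphi}g_{ab}$.

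To fix the proof, drop the second part. State \eqref{2.7} as a hypothesis on $\mathcal{U}$, together with $\eta^a$ being normal to every surface in $\mathcal{U}$ (which your ``repeat on every leaf'' step tacitly needs).
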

It is obvious that if such a mapping exists, it has to be unique.

\begin{remark}[Remark 1.]
Notice that $\eta^a$ being an HKV would immediately imply that it is a KV by the choice \eqref{2.7} in which case we have that either $\eta^a$ generates an isometry (since $\varphi$ will vanish on any 2-surface in a neighborhood of the CKV) or $\eta^a$ is proper. 
\end{remark}
We clarify that Proposition \ref{th1} relies on whether or not the conditions \eqref{2.7} hold. So, whenever \eqref{2.7} fails, subsequent results in the work will generally not be valid. 

Now, it is straightforward to see the following relationship between the null expansion and the evolution of the conformal divergence
\begin{align}
2\mathcal{L}_k\varphi=-\theta_k.\label{var1}
\end{align}
Hence, if for a given surface $\mathcal{S}$ the conformal divergence $\varphi$ is, at all points of $\mathcal{S}$, constant along outgoing geodesics emanating from that surface, $\mathcal{S}$ is a MOTS. That is, under the assumptions of Proposition \ref{th1}, if \eqref{2.3} is a CKV on $\mathcal{M}$, which is a KV for the conformal metric, then for a given surface $\mathcal{S}$ in a leaf of $\mathcal{M}$, $\mathcal{S}$ is a MOTS if the conformal divergence is constant along the outgoing null geodesics emanating from $\mathcal{S}$. This may be seen as an existence result for MOTS.

If a 2-surface $\mathcal{S}$ is a MOTS, then by \eqref{2.6} we have the conformal divergence on $\mathcal{S}$ as
\begin{align}
\varphi|_{\mathcal{S}}=-\tilde\eta z_2.\label{2.6ex}
\end{align}
For any minimal surface $\mathcal{S}$, the MOTS condition requires that $\eta^a$ is at least locally a KV by \eqref{2.6}. One can see the ``at least locally'' from another perspective: away from $\mathcal{S}$, the spacetime gradient $\nabla_a\varphi=\mathcal{D}_a\varphi$ means that in general $\eta^a$ does not even have to be an HKV, except when we have $\varphi$ is constant on each 2-surface in the leaves of $\mathcal{M}$. That is, minimal MOTS are forbidden for a signed $\varphi$. This was discussed in some generality in \cite{mar0}.

On the other hand, the conformal divergence will have a constant sign at $p\in\mathcal{S}$ provided that the mean curvature of $\mathcal{S}$ is constant at $p$ and the components of $\eta^a$ are constant there. In particular, $\varphi|_{\mathcal{S}}$ will vanish at $p\in\mathcal{S}$ if and only if $p$ is both a minimal point of $\mathcal{S}$ and a null point for $\eta^a$. Otherwise, $\varphi|_{\mathcal{S}}$ is signed.

\textbf{Class of applicable spacetimes.} A class of spacetimes naturally admitting the kind of decomposition that is assumed in this work is the class of locally rotationally symmetrc spacetimes, containing many of the well known non-rotating black hole solutions in general relatvity. These spacetimes admit a multiply transitive isometry group, with a continuous isotropy group at each point, and are locally represented by the metric \cite{stew1}
\begin{align}
ds^2=-a^2dt^2+b^2dx^2+c^2(dy^2+f^2dz^2),\label{dsmet1}
\end{align}
where $a,b$ and $c$ are functions of $t$ and $x$ only, and $f$ is a function of $y$. In terms of the metric functions, the conditions \eqref{2.7} are 
\begin{align}
\varphi_t=-(\ln c)_t,\quad \varphi_x=-(\ln c)_x,
\end{align}
i.e. $\varphi=-(\ln c)+d$, for some constant $d$. In other words, the conformal divergence has to be related to the area of the surface for this to work. Whenever $c=c(x)$ (for example, when the slices are foliated by topological 2-spheres of radius $x$), of course $\varphi=\varphi(x)$, in which case, necessarily, we have that any MOTS in a region of $\mathcal{M}$ where \eqref{2.7} holds is minimal.

Moreover, the approach will work equally well in the case of perturbed LRS geometries: for a perturbed LRS geometry, once the $\{u,n\}$ frame is fixed, the spacetime may admit an ``{\em almost Killing vector field}'', for which a CKV is a very special case \cite{rich1,taub1}. The verification of \eqref{2.7} can then be made to apply the stability results that are to follow.

\textbf{Establishing integrability.} Given a CKV on a spacetime $\mathcal{M}$ admitting the required decomposition and a 2-surface $\mathcal{S}$, we may compute the conformal divergence of the ckv and then perform a simple check on the directional derivatives of the conformal divergence to verify \eqref{2.7}. However, we would like to obtain integrability condition for \eqref{2.7} to know when it holds in a region of the spacetime from information pertaining to the curvature of the surface. For simplicity, let us restrict to those cases where the gradient of the conformal divergence has no component on $\mathcal{S}$. 

Now, in order for $u^a$ and $n^a$ to be surface forming the Lie bracket of $u^a$ and $n^a$ acting on an arbitrary function $\psi$, has to satisfy
\begin{align}
\left[u,n\right]\psi=(-\dot{\psi}\dot{u}_a+\psi'u_a')n^a.\label{integ1}
\end{align}
Now, if we label the affine parameters of integral curves of $u^a$ and $n^a$ by $\tau$ and $\omega$, respectively, this would be equivalent to the commuting of the mixed partial derivatives of $\varphi$, i.e. That is, $\psi_{\tau\omega}=\psi_{\omega\tau}$. Replace $\psi$ by $\varphi$, and by substituting \eqref{2.7} into the left and right hand sides of \eqref{integ1} (now the explicit forms of $z_i$ are used) we have the integrability condition given by
\begin{align}
(z_1\dot{u}_a-z_2u_a')n^a=Z_{ab}\nabla^al^b+\tilde Z_{ab}\nabla^ak^b,\label{integ2}
\end{align}
where we have now defined
\begin{align*}
Z_{ab}=K_{ab}^{(k)}+q^c_{(a}\nabla_{b)}k_c,\quad \tilde Z_{ab}=K_{ab}^{(l)}+q^c_{(a}\nabla_{b)}l_c,
\end{align*}
with the tensor
\begin{align*}
K_{ab}^{(j)}=q^c_aq^d_b\nabla_cj_d
\end{align*}
denoting the extrinsic curvature of the surface along a null normal vector field $j^a$. Since the extrinsic curvature of the surface may be decomposed into its trace and trace-free parts as
\begin{align}
K_{ab}^{(j)}=\theta_jq_{ab}+\sigma_{ab}^{(j)},
\end{align}
with $\sigma_{ab}^{(j)}=(q^c_aq^d_b-(1/2)q_{ab}q^{cd})\nabla_cj_d$ denoting the $j$-shear and $\theta_j$ the 2-trace of $j^a$, the conditions of \eqref{integ1} can be written as
\begin{align}
(z_1\dot{u}_a-z_2u_a')n^a=2\theta_k\theta_l&+(\sigma_{ab}^{(k)}+q^c_{(a}\nabla_{b)}k_c)\nabla^al^b\nonumber\\
&+(\sigma_{ab}^{(l)}+q^c_{(a}\nabla_{b)}l_c)\nabla^ak^b.\label{integ4}
\end{align}
Note that the first term vanishes on a MOTS in a region where \eqref{2.7} holds.

It is clear from \eqref{integ1} that in static regions the integrability condition reduces to 
\begin{align*}
z_2(u_a'n^a)=0.
\end{align*} 
So, either the mean curvature vanishes where the CKV is a KV for the conformal metric, or the propagation of the temporal vector field will have no component along the unit spatial normal to the surface. The latter, in particular, in the case that \eqref{2.7} holds on the embedding slice, is equivalent to the slice being time-symmetric (the mean curvature of the slice vanishes). In the context of MOTS as considered throughout this work, a MOTS in a time-symmetric slice is necessarily minimal. Therefore, in either case in static regions containing a MOTS, if \eqref{2.7} holds, $\varphi$ vanishes, i.e. $\eta^a$ is a KV there, and hence the MOTS cannot be stable due to the results of Mars and Senovilla \cite{mar0}.

For specificity, we may consider the LRS class of spacetimes in \eqref{dsmet1}. In this case the integrability condition may be explicitly expressed in a very simple form:
\begin{align}
c_t(\ln a)_x+c_x(\ln b)_t=0.\label{integ3}
\end{align}
providing for a simple check on the metric functions. It is clear that this always holds in static regions.

The applicability of our results to de Sitter spacetime which is obviously LRS, analyzed throughout \cite{mar3}, may be ruled out. We first check directly the criteria \eqref{2.7}: consider the umbilical clicing of the de Sitter metric (we restrict ourselves to dimension four here) given by
\begin{align}
ds^2=\frac{1}{\bar\delta^2\cos^2\sigma}\left(-d\sigma^2+g_{\mathbb{S}^3}\right),\label{dsmet1}
\end{align}
with $\sigma\in(-\pi/2,\pi/2)$ increasing to the future and $g_{\mathbb{S}^3}$ is the metric on the round 3-sphere:
\begin{align}
g_{\mathbb{S}^3}=r^2(dR^2+\sin^2R(d\theta^2+\sin^2\theta d\phi^2)),
\end{align} 
and for which the spacetime Ricci curvature is $R_{ab}=3\bar\delta^2g_{ab}$, where $\bar\delta$ is a positive constant. The metric \eqref{dsmet1} admits the timelike and future-pointing vector field $\eta^a=(\bar\delta\cos\sigma)^{-1}u^a$. The temporal and spatial unit vector fields are respectively $u^a=-(\delta\cos\sigma)\partial/\partial\sigma$ and $n^a=(\bar\delta\cos\sigma/r)\partial/\partial R$. We now compute 
\begin{align}
z_1=-2\bar\delta\sin\sigma,\quad z_2=\frac{2}{r}\bar\delta\cos\sigma((\ln r)_R+\cot R),
\end{align}
so that the conformal divergence, using \eqref{2.6}, computes to
\begin{align}
\varphi=\bar\eta z_1=-\tan\sigma.
\end{align}
One then checks that \eqref{2.7} is equivalent to the requirement 
\begin{align}
z_2=0\quad\mbox{or}\quad \sec\sigma=\sin\sigma,
\end{align}
both of which fail. 

We may also check through the integrability condition \eqref{integ3} which again reduces to the vanishing of $z_2$ since the first term vanishes and $c_R\neq0$.


\section{Stability results}\label{4}


In order to prove the main results of this work, we begin by establishing the following Lemma.
\begin{lemma}
Let $\mathcal{M}$ be a 1+3 spacetime satisfying the NEC whose spacelike leaves are foliated by \(2\)-surfaces with unit normal $n^a$. Let $\mathcal{S}$ be one of these surfaces and suppose $\eta^a$ is a CKV in the normal space of $\mathcal{S}$, which is a KV for the conformal metric in a neighborhood $\mathcal{U}\subseteq\mathcal{M}$. Then, for a MOTS $\mathcal{S}$ in $\mathcal{U}$, the variation of the expansion along the CKV vanishes identically on $\mathcal{S}$.
\end{lemma}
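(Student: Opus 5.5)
We need to show that $\delta_\eta\theta_k=0$ on $\mathcal{S}$, where $\eta^a=\bar\eta k^a+\tilde\eta l^a$. The plan is to use the standard identity for the variation of the expansion along a CKV. The Lie derivative of the induced metric is $\mathcal{L}_\eta q_{ab}=2\varphi q_{ab}$, and the null normal $k^a$ is dragged along $\eta$ up to rescaling and normal-bundle rotations. This gives an expression of the form
\begin{align*}
\delta_\eta\theta_k=-\varphi\,\theta_k+c\,\theta_k-2\,\mathcal{L}_k\varphi ,
\end{align*}
where $c$ is a gauge-dependent rescaling coefficient. The first two terms come from the conformal scaling of $q^{ab}$ and of $k^a$. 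The last term is the ``extra term involving the variation of the divergence of the CKV'' mentioned in the introduction, which arises from $\nabla_a\nabla_b\eta^c$ through the standard identity for CKVs,
\begin{align*}
\nabla_a\nabla_b\eta_c=R_{cbad}\eta^d+g_{bc}\nabla_a\varphi+g_{ac}\nabla_b\varphi-g_{ab}\nabla_c\varphi .
\end{align*}
After contracting with $q^{ab}k^c$, the term $-q^{ab}k_c\nabla^c\varphi$ produces $-2\mathcal{L}_k\varphi$. The other two gradient terms drop out because $q_{bc}k^c=0$.

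Next I invoke \eqref{var1}, which holds precisely because $\eta^a$ is a KV for the conformal metric via \eqref{2.7}. It gives $-2\mathcal{L}_k\varphi=\theta_k$, so every term in $\delta_\eta\theta_k$ becomes proportional to $\theta_k$. On a MOTS in $\mathcal{U}$, $\theta_k\equiv0$ on $\mathcal{S}$, and the variation vanishes identically.

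Alternatively, one can split $\delta_\eta\theta_k=\bar\eta\,\delta_k\theta_k+\tilde\eta\,\delta_l\theta_k$, using linearity of the variation in the deformation vector modulo derivative terms $\mathcal{D}\bar\eta$, $\mathcal{D}\tilde\eta$.
\begin{itemize}
\item For the $k$-part, I would use the Raychaudhuri equation $\delta_k\theta_k=-\tfrac12\theta_k^2-\sigma^{(k)}_{ab}\sigma^{(k)ab}-G_{ab}k^ak^b$, which on a MOTS leaves only shear and NEC terms.
\item For the $l$-part, I would use the cross-focusing equation.
\end{itemize}
I expect this route to be messier. Moreover, the NEC and shear terms would not obviously cancel. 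That is why I prefer the conformal-Killing route: there the curvature terms appear contracted as $R_{cbad}q^{ab}k^c\eta^d$, and these must be shown to combine with the second fundamental form terms into multiples of $\theta_k$. This is done using the fact that $\eta$ lies in the normal space, so that $\eta^d\propto$ combinations of $k^d,l^d$.

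The main obstacle is justifying that the curvature and shear contributions cancel. My first attempt would be to compute $\mathcal{L}_\eta(q^{ab}\nabla_ak_b)$ directly, using $\mathcal{L}_\eta q^{ab}=-2\varphi q^{ab}$ and choosing the evolved null normal $k'^a$ of $\Xi(\mathcal{S})$ as the Lie-dragged $k^a$ suitably rescaled. Since the conformal flow maps null normals to null normals, $\mathcal{L}_\eta k^a=\alpha k^a$ for some $\alpha$ when $\eta$ preserves the foliation. With this choice, $\mathcal{L}_\eta\theta_k=(\alpha-\varphi)\theta_k+k^a\nabla_a$-terms, and the only surviving non-$\theta_k$ piece is the $\varphi$-gradient term, which \eqref{var1} converts into $\theta_k$. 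The NEC hypothesis, I expect, enters only to control the sign of the normalization, and is not essential for the vanishing itself.
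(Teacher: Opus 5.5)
Your argument is essentially the paper's: the paper quotes the conformal variation formula $\delta_{\eta}\theta_k=2\mathcal{L}_k\varphi$ (adapted from Mars et al.), then combines it with \eqref{var1} to get $\delta_\eta\theta_k=-\theta_k$, which vanishes on a MOTS; your final Lie-derivative computation sketches where that formula comes from. One slip is the sign of the gradient term, which should be $+2\mathcal{L}_k\varphi$, because $\mathcal{L}_\eta\nabla_ak_b-\nabla_a\mathcal{L}_\eta k_b=-k_c\,\mathcal{L}_\eta\Gamma^c_{ab}$ and for a CKV $\mathcal{L}_\eta\Gamma^c_{ab}$ involves only first derivatives of $\varphi$, so no curvature term ever needs to cancel; the sign is immaterial anyway, since \eqref{var1} makes $\mathcal{L}_k\varphi$ vanish on the MOTS.
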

\begin{proof}
In the case that the variation is taken along the conformal symmetry vector, one has, for the expansion on the surface $\mathcal{S}$ (the general case where the vector field is not necessarily a CKV is obtained in \cite{mar2}, but we adapt their result to the case of a CKV for the purpose of the current work),
\begin{align}
\delta_{\eta}\theta_k&=2\mathcal{L}_k\varphi.\label{3.1}
\end{align}
And with the choice \eqref{2.7}, following from the relationship between the null expansion and the variation of the conformal divergence, 
\begin{align}
\delta_{\eta}\theta_k&=-\theta_k.\label{3.2}
\end{align}
Thus, the variation is identically zero if $\mathcal{S}$ is a MOTS.\qed
\end{proof}

Note that if the conformal divergence is a temporal function, i.e. the gradient of $\varphi$ is timelike and past-pointing, then, of course on an $\mathcal{S}$, $\mathcal{L}_k\varphi>0$. So, a 2-surface $\mathcal{S}$ in a neighborhood where the CKV is defined, with \eqref{2.7} valid, cannot be a MOTS. On the other hand, the relation \eqref{2.7} cannot hold for $\mathcal{L}_k\varphi\neq0$. 

The vanishing of the variation implies that each $\mathcal{S}_v'$, for $v>0$, in the family of surfaces $\{\mathcal{S}_v\}$, is a MOTS, along integral lines of $\eta^a$. That is, under our current considerations, a MOTS in a given leaf of $\mathcal{M}$ will propagate to nearby leaves and foliate a MOTT. This can be stated as the following MOTT existence result. 
\begin{proposition}\label{mottex}
Let $\mathcal{M}$ be a 1+3 spacetime, satisfying the NEC whose spacelike leaves are foliated by \(2\)-surfaces with unit normal $n^a$. Let $\mathcal{S}$ be one of these surfaces and suppose $\eta^a$ is a CKV in the normal space of $\mathcal{S}$, which is a KV for the conformal metric in a neighborhood $\mathcal{U}\subseteq\mathcal{M}$. If $\mathcal{S}$ is a MOTS in $\mathcal{U}$, then for a local diffeomorphism $\Xi$ generated by $\eta^a$, $\Xi(\mathcal{S})$ is a MOTT.
\end{proposition}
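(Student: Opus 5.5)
The plan is to obtain the MOTT as the orbit of $\mathcal{S}$ under the local flow of $\eta^a$, and to use the preceding Lemma to show that every surface in this family is a MOTS.

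First I set up the deformation. Let $\Xi_v$ be the local one-parameter group of diffeomorphisms generated by $\eta^a$ in $\mathcal{U}$, defined for $|v|<\epsilon$. Shrinking $\epsilon$ if needed, each $\mathcal{S}_v=\Xi_v(\mathcal{S})$ stays in $\mathcal{U}$, because $\mathcal{S}$ is compact. I take $\Xi(\mathcal{S})=\bigcup_{|v|<\epsilon}\mathcal{S}_v$.

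Next I show this union is a smooth hypersurface. The vector field $\eta^a$ lies in the normal space of $\mathcal{S}$ and is not identically zero there. At points where $\eta^a\neq0$ it is transverse to $\mathcal{S}$, so the map $(v,p)\mapsto\Xi_v(p)$ is an immersion and $\Xi(\mathcal{S})$ is locally a smooth 3-dimensional hypersurface foliated by the $\mathcal{S}_v$. I will assume, as the statement implicitly does, that $\eta^a$ has no zeros on $\mathcal{S}$. Otherwise the tube degenerates at those points and one restricts to the open set where $\eta^a\neq 0$.

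The main step is to show that each leaf is a MOTS. Let $f(v)$ denote the outgoing expansion of $\mathcal{S}_v$, pulled back to $\mathcal{S}$ by $\Xi_v$. By the argument of the Lemma, namely \eqref{3.1}, \eqref{3.2} and \eqref{var1}, which hold at every surface in $\mathcal{U}$ where \eqref{2.7} holds, we have
\begin{align*}
\frac{d}{dv}f(v)=\delta_\eta\theta_k\big|_{\mathcal{S}_v}=-f(v).
\end{align*}
This identity holds for arbitrary leaves $\mathcal{S}_v$, not only MOTS. Here I use that the conformal metric has $\eta^a$ as a Killing vector, so the flow carries the null-normal frame adapted to $\mathcal{S}$ to the one adapted to $\mathcal{S}_v$, up to a rescaling that is absorbed by \eqref{var1}. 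The equation is a linear ODE pointwise on $\mathcal{S}$ with $f(0)=\theta_k|_{\mathcal{S}}=0$. By uniqueness, $f(v)=f(0)e^{-v}=0$ for all $|v|<\epsilon$. Hence each $\mathcal{S}_v$ is a MOTS, and $\Xi(\mathcal{S})$ is a smooth hypersurface foliated by MOTS, that is, a MOTT.

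The main obstacle is the middle of the last step: justifying that \eqref{3.2} holds on the deformed leaves $\mathcal{S}_v$ and not just on $\mathcal{S}$. This requires $\eta^a$ to remain normal to $\mathcal{S}_v$ and the leaves $\mathcal{S}_v$ to remain in the given 2-surface foliation of the slices. I would try to get this from conformal invariance of the flow. Since $\Xi_v$ is an isometry of $\tilde g_{ab}$, it preserves $\tilde g$-orthogonality, and orthogonality is the same for $g_{ab}$ and $\tilde g_{ab}$. So the image of the normal bundle of $\mathcal{S}$ is the normal bundle of $\mathcal{S}_v$, and $\eta^a=d\Xi_v(\eta^a)$ stays normal. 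Failing that, I would fall back on a weaker conclusion: the vanishing first variation given by the Lemma, together with the expansion's homogeneity under rescaling of $k^a$, yields the MOTT as a first-order statement.
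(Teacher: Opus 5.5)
Your proposal follows the paper's own route: the paper also uses the Lemma's identity $\delta_\eta\theta_k=-\theta_k$, claimed for every surface and not only for MOTS, to conclude that all surfaces swept out along the integral curves of $\eta^a$ are MOTS. Your linear ODE $f'=-f$, $f(0)=0$ simply makes that step explicit. The point you flag, namely whether \eqref{var1} holds on the leaves $\mathcal{S}_v$, is taken for granted in the paper, and your orthogonality remark covers only half of it. It closes because \eqref{var1} on $\mathcal{S}$ says precisely that $\theta_k+2\mathcal{L}_k\varphi$ vanishes, and this quantity is the expansion of $\mathcal{S}$ along $k^a$ for $\tilde g_{ab}=e^{2\varphi}g_{ab}$. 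The $\tilde g$-isometry $\Xi_v$ carries that vanishing to $\mathcal{S}_v$ with null normal $\Xi_{v*}k$, so you never need $\mathcal{S}_v$ to be a leaf of the foliation.
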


Additionally, if one imposes the null energy condition, it can be shown to imply that for small $v$, each $\mathcal{S}_v'$ is weakly outer trapped. Then, one may evoke the same arguments used in proving Theorem 4 of \cite{mar3} to establish that $\tilde\eta$ can be nowhere negative on $\mathcal{S}$. As will later be seen, this non-negativity is quite crucial for stability. 

Let us now recall the following result.
\begin{proposition}[Proposition 1, Mars {\em et al.} \cite{mar3}]\label{marprop}
Let $\mathcal{S}$ be a MOTS in an $n\geq3$-dimensional spacetime $(\mathcal{M},g_{ab})$ satisfying the NEC. Assume that there exists a future causal vector field $\xi^a$ along $\mathcal{S}$ which is not everywhere proportional to $k^a$ and such that $\delta_{\xi}\theta_k\geq0$. Then $\mathcal{S}$ cannot be strictly stable. Moreover, if $\delta_{\xi}\theta_k$ is positive somewhere then $\mathcal{S}$ cannot be marginally stable either.
\end{proposition}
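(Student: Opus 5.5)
The plan is to reduce the hypothesis $\delta_{\xi}\theta_k\geq0$ to a differential inequality $L_{\mathcal{S}}B\leq0$ for a non-negative function $B$ on $\mathcal{S}$. Then we test it against the principal eigenfunction of the adjoint of the stability operator \eqref{st2}.

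\textbf{Step 1: decompose $\xi^a$.} On $\mathcal{S}$ write
\begin{align*}
\xi^a=Ak^a+Bl^a+T^a,\qquad T^a\ \mbox{tangent to}\ \mathcal{S},
\end{align*}
with $A=-\tfrac12\xi_al^a$ and $B=-\tfrac12\xi_ak^a$ (using $k_al^a=-2$). Since $\xi^a$ is future causal and $k^a,l^a$ are future null, both $A\geq0$ and $B\geq0$ hold on $\mathcal{S}$. A tangential deformation only reparametrizes $\mathcal{S}$, so $\delta_T\theta_k=T^a\mathcal{D}_a\theta_k$. This vanishes because $\theta_k\equiv0$ on a MOTS. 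We therefore read the hypothesis that $\xi^a$ is not everywhere proportional to $k^a$ as saying $B\not\equiv0$, i.e. $\xi^a$ has a genuine $l$-component somewhere. This is where the tangential part has to be handled with care; if needed I would restrict to the normal projection of $\xi^a$, which is the relevant case in this paper since $\eta^a$ lies in the normal space.

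\textbf{Step 2: compute the normal variations.} Along $k^a$, the Raychaudhuri equation on a MOTS gives
\begin{align*}
\delta_{Ak}\theta_k=-A\,W,\qquad W:=\sigma^{(k)}_{ab}\sigma^{(k)ab}+\mathcal{G}_{ab}k^ak^b\geq0,
\end{align*}
where $W\geq0$ by the NEC. For the $l$-direction, write $l^a=k^a-2n^a$ and use linearity of the first variation in the deformation vector, together with \eqref{st1}. This gives
\begin{align*}
\delta_{Bl}\theta_k=-B\,W-2L_{\mathcal{S}}B.
\end{align*}
Adding the two pieces, the hypothesis becomes
\begin{align*}
0\leq\delta_{\xi}\theta_k=-(A+B)W-2L_{\mathcal{S}}B,
\end{align*}
hence $L_{\mathcal{S}}B\leq-\tfrac12(A+B)W\leq0$, with $B\geq0$ and $B\not\equiv0$.

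\textbf{Step 3: eigenvalue argument.} Let $\lambda_0$ be the (real) principal eigenvalue of $L_{\mathcal{S}}$. It is also the principal eigenvalue of the formal adjoint $L^{*}_{\mathcal{S}}$, which has a positive eigenfunction $\phi^{*}>0$ (Krein--Rutman, as in \cite{and1,and2}). Integrating by parts,
\begin{align*}
\lambda_0\int_{\mathcal{S}}\phi^{*}B=\int_{\mathcal{S}}\phi^{*}L_{\mathcal{S}}B\leq0 .
\end{align*}
Since $\int_{\mathcal{S}}\phi^{*}B>0$, we get $\lambda_0\leq0$, so $\mathcal{S}$ is not strictly stable. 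If in addition $\delta_{\xi}\theta_k>0$ somewhere, then $L_{\mathcal{S}}B<0$ on an open set. The integral is then strictly negative, so $\lambda_0<0$ and $\mathcal{S}$ is not marginally stable either.

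\textbf{Main obstacle.} The main difficulty is bookkeeping in Step 2. One must check that the first variation is $C^\infty(\mathcal{S})$-linear in the deformation vector in exactly the normalization of \eqref{st1}, and that the sign conventions for $k^a,l^a,n^a$ make the $l$-variation come out as $-BW-2L_{\mathcal{S}}B$ rather than with the opposite sign. I would verify this first on a round sphere in Schwarzschild before trusting the general formula.
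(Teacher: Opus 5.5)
The paper does not prove this proposition; it quotes it from \cite{mar3}. Your argument is correct and is the standard proof of that result. You decompose $\xi^a$ into $k$-, $l$- and tangential parts. You use focusing on the $k$-part, and $l^a=k^a-2n^a$ with \eqref{st1} on the $l$-part, to get $L_{\mathcal S}B=-\tfrac12[(A+B)W+\delta_\xi\theta_k]\le0$ with $B\ge0$, $B\not\equiv0$. You then pair this with the positive principal eigenfunction of $L^*_{\mathcal S}$.

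Your hedges can be dropped.

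\textbf{Step 1.} Where $B=0$, causality gives $0\ge g_{ab}\xi^a\xi^b=-4AB+T_aT^a=T_aT^a$. So $T^a=0$ and $\xi^a=Ak^a$ there. Hence $B\not\equiv0$ is exactly the non-proportionality hypothesis, and there is no need to project onto the normal space.

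\textbf{Step 2.} What you actually use is additivity of the first variation. It is not $C^\infty(\mathcal S)$-linear, since $\delta_{\psi n}\theta_k$ involves derivatives of $\psi$. You also use that on a MOTS $\delta_{Ak}\theta_k=-AW$ contains no derivatives of $A$. With $L_{\mathcal S}$ defined by \eqref{st1}, these two facts force the coefficient $-BW-2L_{\mathcal S}B$, so no test case is needed. A round Schwarzschild sphere could not detect the $W$-coefficient anyway, since $W\equiv0$ there. Do not calibrate against the explicit expression \eqref{st2}: as printed, it lacks the $-\tfrac12W$ term in $Y$.
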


We draw on the above result for the following reason. To allow for strict stability, Proposition \ref{marprop} requires that
\begin{itemize}
\item[(i)] $\eta^a$ is past causal; or

\item[(ii)] Future (or past) spacelike,
\end{itemize}
although marginal stability is allowed for.

We are now in the position to prove our main results.

\subsection*{Proof of Theorem \ref{th75}} 

To prove this result we will show that if $\mathcal{S}$ is a MOTS, then under the assumptions of the theorem, it is strictly stable. The conclusion of its smooth evolution then follows from Theorems 1 and 2 of \cite{and1}. 

By defining 
\begin{align*}
W=K_{ab}^{(k)}K^{ab(k)}+\mathcal{G}_{ab}k^ak^b,
\end{align*}
it was shown in \cite{and2} that
\begin{align}
L_{\mathcal{S}}\tilde\eta=-\frac{1}{2}(\delta_{\eta}\theta_k+4\bar\eta W),\label{carop1}
\end{align}
which, for our current consideration, due to the vanishing of the variation, becomes
\begin{align}
L_{\mathcal{S}}\tilde\eta=-2\bar\eta W,\label{carop1}
\end{align}

We look for a variation for which $L_{\mathcal{S}}\tilde\eta$ coincides with the usual stability operator. If we are to define the scaled normal
\begin{align}
\underline{\eta}^a=-(2\tilde\eta)n^a=-\tilde\eta(k^a-l^a),\label{carop2}
\end{align}
then the operator in \eqref{carop1} coincides with the usual MOTS stability operator \eqref{st2}. That is,
\begin{align}
L_{\mathcal{S}}\tilde\eta=L_{\mathcal{S}}\psi.\label{carop3}
\end{align}
See, for example, Remark 2 of \cite{mar3} (one may also consult the Eq. (2.23) of the reference \cite{ib10} where the relationship is quite transparent). In particular, $\tilde\eta$ will be the associated eigenfunction of the principal eigenvalue, i.e.  
\begin{align}
\lambda_0\tilde\eta=-2\bar\eta W.
\end{align} 
Then, we may take $\tilde\eta$ to be everywhere positive and the stability of the MOTS is determined by the sign configurations of the scalars $W$ and $\bar\eta$. If the NEC strictly holds somewhere on $\mathcal{S}$, then $W|_{\mathcal{S}}>0$,  and stability of $\mathcal{S}$ is characterized as follows:
\begin{align*}
\bar\eta|_{\mathcal{S}}&\leq0,\quad\mbox{(stable)}\\
\bar\eta|_{\mathcal{S}}&<0,\quad\mbox{(strictly stable)}\\
\bar\eta|_{\mathcal{S}}&>0.\quad\mbox{(unstable)}
\end{align*}
And because $\eta^a$ is past-pointing, $\tilde\eta>0$ implies $\bar\eta<0$ and hence $\mathcal{S}$ is strictly stable, which concludes the proof.\qed\\

If $\eta^a$ is future-pointing and causal, then the MOTS is either unstable or marginally stable by Proposition \ref{marprop}, and in the marginally stable case the MOTS evolves to a null MOTT . Conversely, marginal stabilty is only compatible with a future-pointing (but not necessarily a causal) $\eta^a$. \\

\subsection*{Proof of Theorem \ref{th76}} 

For a stable MOTS, the CKV, depending on whether or not the CKV is past or future-pointing, may or may not lie in the light cone. Now, because the MOTS is stable, as before $\tilde\eta$ is chosen strictly positive on the MOTS. In particular, the CKV cannot lie in the interior of the light cone at $p$, i.e. $\eta^a$ is either spacelike or null (of course the null case is ruled out in the case of a past-pointing $\eta^a$). Necessarily, the CKV points inward to $\mathcal{S}$.\qed\\

In the marginally stable case $\eta^a$ will lie on the future light cone, pointing inward to $\mathcal{S}$. The allowed region for the CKV in the case of a stable MOTS has been sketched in Fig. \ref{cone1}.
 \begin{figure}[ht!]
 \centering
  \includegraphics[width=0.2\textwidth]{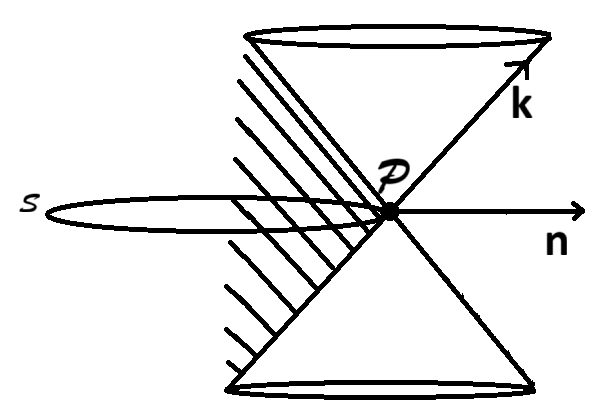}
 \caption{The light cone through a point $p$ on a stable MOTS $\mathcal{S}$: A depiction of the allowed region for $\eta^a$, above the outgoing null line.}
\label{cone1}
 \end{figure}

In the case that the CKV $\eta^a$ is future or past-pointing, it will lie above or below the spacelike normal, respectively, as sketched in Fig. \ref{cone2}. If future-pointing, then it must lie on the future light cone inward to $\mathcal{S}$.
\begin{figure}[ht!]
	\centering
	\begin{subfigure}{0.4\linewidth}
		\includegraphics[width=\linewidth]{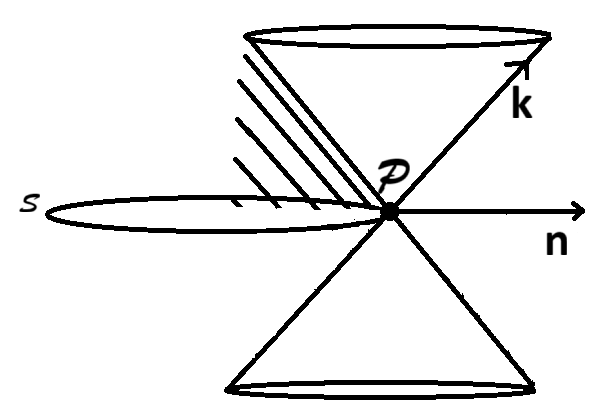}
		\caption{Allowed region for a future pointing $\eta^a$.}
		\label{fig:subfigA}
	\end{subfigure}
\hfill
	\begin{subfigure}{0.4\linewidth}
		\includegraphics[width=\linewidth]{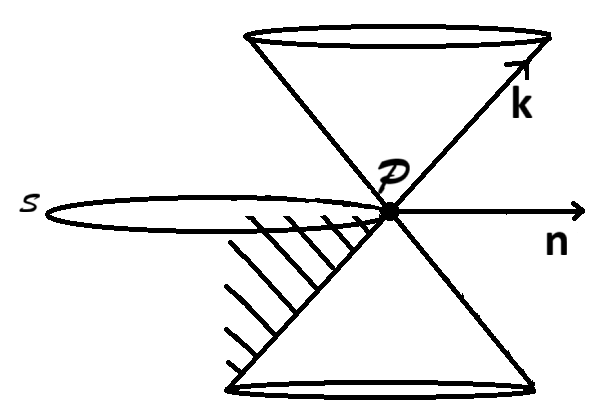}
		\caption{Allowed region for a past pointing $\eta^a$.}
		\label{fig:subfigB}
	\end{subfigure}
	\caption{Light cone through a point $p$ on a stable MOTS $\mathcal{S}$: A depiction of the allowed regions for a future-pointing $\eta^a$ (a) and a past-pointing $\eta^a$ (b).}
	\label{cone2}
\end{figure}

Now, in the case of a stable MOTS we know that $\tilde\eta>0$ on $\mathcal{S}$. We also know that aspherical MOTS are generically unstable, and so to consider stable MOTS we consider $z_2>0$. Then, $\varphi|_{\mathcal{S}}<0$ on $\mathcal{S}$, by \eqref{2.6ex}. This then is a necessary condition for strict stability. In particular, it follows that 
\begin{theorem}\label{th7}
Let $\mathcal{M}$ be a 1+3 spacetime whose spacelike leaves are foliated by \(2\)-surfaces with unit normal $n^a$. Let $\mathcal{S}$ be one of these surfaces on which the NEC strictly holds somewhere, and suppose $\eta^a$ is a CKV in the normal space of and not everywhere null on $\mathcal{S}$, which is a KV for the conformal metric in a neighborhood $\mathcal{U}\subseteq\mathcal{M}$. Then, on any MOTS $\mathcal{S}$ in $\mathcal{U}$, if $\varphi|_{\mathcal{S}}\geq0$, $\mathcal{S}$ cannot be stable. 
\end{theorem}

Conversely, for a spherical MOTS $\mathcal{S}$ such that $\varphi|_{\mathcal{S}}<0$ in a spacetime for which the NEC holds, $\tilde\eta>0$ on $\mathcal{S}$. And if $\eta^a$ is past-pointing, $\bar\eta<0$. This therefore leads to the following.
\begin{theorem}\label{thx7}
Let $\mathcal{M}$ be a 1+3 spacetime, satisfying the NEC whose spacelike leaves are foliated by \(2\)-surfaces with unit normal $n^a$. Let $\mathcal{S}$ be one of these surfaces on which the NEC strictly holds somewhere, and suppose $\eta^a$ is a past-pointing CKV in the normal space of $\mathcal{S}$, which is a KV for the conformal metric in a neighborhood $\mathcal{U}\subseteq\mathcal{M}$. Then, any spherical MOTS $\mathcal{S}$ in $\mathcal{U}$ on which $\varphi|_{\mathcal{S}}<0$ is strictly stable and evolves to a smooth spacelike horizon.
\end{theorem}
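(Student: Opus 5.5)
The plan is to reduce Theorem \ref{thx7} to the mechanism already used in the proof of Theorem \ref{th75}: identify $\tilde\eta$ as a positive principal eigenfunction of $L_{\mathcal{S}}$ and then read off the sign of $\lambda_0$ from $\lambda_0\tilde\eta=-2\bar\eta W$. The new ingredient is that positivity of $\tilde\eta$ is now to be derived from the hypothesis $\varphi|_{\mathcal{S}}<0$ together with sphericity, rather than from the orientation argument borrowed from \cite{mar3}.

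The first step uses the MOTS condition in \eqref{2.6ex}, which gives $\varphi|_{\mathcal{S}}=-\tilde\eta z_2$. For a spherical MOTS (a round-type 2-sphere in the foliation, as in the LRS class) the mean curvature satisfies $z_2>0$ on $\mathcal{S}$, as in the discussion preceding Theorem \ref{th7}. Then $\varphi|_{\mathcal{S}}<0$ forces $\tilde\eta>0$ everywhere on $\mathcal{S}$. Next, past-pointing means $-u_a\eta^a=\bar\eta u_a k^a\cdot(-1)+\dots$; concretely $u_ak^a=u_al^a=-1$, so $u_a\eta^a=-(\bar\eta+\tilde\eta)$. The past-pointing condition in the paper's convention then fixes the sign relation. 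I would follow the proof of Theorem \ref{th75} and use that a past-pointing $\eta^a$ with $\tilde\eta>0$ has $\bar\eta<0$; more precisely, past causal/spacelike-below-normal configurations give $\bar\eta<0$, as depicted in Fig.~\ref{cone2}(b).

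The second step invokes the Lemma, which gives $\delta_\eta\theta_k=0$ on the MOTS, so that \eqref{carop1} becomes $L_{\mathcal{S}}\tilde\eta=-2\bar\eta W$. Via the rescaled normal \eqref{carop2}, this is the Andersson \emph{et al.} stability operator acting on the strictly positive function $\tilde\eta$. The NEC gives $W\ge0$, and strict NEC somewhere gives $W>0$ somewhere. Hence $L_{\mathcal{S}}\tilde\eta\ge0$ and is not identically zero. By the characterization recalled after \eqref{st2}, namely the existence of a positive $\psi$ with $L_{\mathcal{S}}\psi\geq0$, $\not\equiv0$, we get $\lambda_0>0$. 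The cleanest route is the standard fact that the principal eigenfunction $\phi_0>0$ of the adjoint operator satisfies $\lambda_0\int\phi_0\tilde\eta=\int\phi_0 L_{\mathcal{S}}\tilde\eta>0$. This gives strict stability. Theorems 1 and 2 of \cite{and1} then provide smooth evolution to a horizon, which is spacelike because the NEC holds strictly somewhere (or the shear is nonzero).

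The main obstacle is the first step, namely justifying $z_2>0$ for a ``spherical'' MOTS and deducing $\bar\eta<0$ rigorously from past-pointing. Pure sign bookkeeping with $u_a\eta^a$ only yields $\bar\eta+\tilde\eta$ of one sign, not $\bar\eta<0$ individually. I would first try to settle this by using Theorem \ref{th76} and Corollary \ref{th77}, which say $\eta^a$ cannot be timelike on a stable MOTS, combined with Proposition \ref{marprop}. Those results exclude future-causal $\eta^a$ and pin $\eta^a$ to the region of Fig.~\ref{cone2}(b), where $\bar\eta<0$. If that is circular, I would instead adopt the convention used in the proof of Theorem \ref{th75}, where past-pointing with $\tilde\eta>0$ is taken to mean $\bar\eta<0$.
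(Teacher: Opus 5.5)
Your argument follows the paper's proof. $\varphi|_{\mathcal{S}}=-\tilde\eta z_2$ with $z_2>0$ (which is all the paper means by ``spherical'') gives $\tilde\eta>0$. The Lemma then kills $\delta_\eta\theta_k$, and the sign of $L_{\mathcal{S}}\tilde\eta=-2\bar\eta W$ yields strict stability and the horizon via Andersson \emph{et al.}. Your adjoint principal-eigenfunction step, using $W\ge0$ with $W>0$ somewhere, is tighter than the paper's unjustified claim that $\tilde\eta$ is itself the principal eigenfunction.

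The obstacle you flag about $\bar\eta$ is not real. Past-pointing means $u_a\eta^a=-(\bar\eta+\tilde\eta)>0$, and together with the already established $\tilde\eta>0$ this gives $\bar\eta<-\tilde\eta<0$ at once. So drop the fallback through Theorem~\ref{th76} and Corollary~\ref{th77}: it would be circular, since those results only constrain $\eta^a$ once stability is known.
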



\begin{acknowledgments}


The author is thankful to the anonymous reviewer for very helpful comments. The author acknowledges that this research is supported by the Institute of Mathematics, funded through the High-level Talent Research Start-up Project Funding of the Henan Academy of Sciences (Project No.: 251819085).
\end{acknowledgments}

\end{document}